\documentclass[11pt]{article}

\usepackage{amsmath,amssymb,amsthm,mathrsfs}
\usepackage{geometry}
\usepackage{mathtools}
\usepackage{bm}
\usepackage{bbm}
 
\newtheorem{definition}{Definition}[section]
\newtheorem{remark}[definition]{Remark}

\newtheorem{proposition}[definition]{Proposition}
\newtheorem{problem}[definition]{Problem}

\newcommand{\A}{\mathcal A}
\newcommand{\Hh}{\mathcal H}

\newcommand{\D}{\mathcal D}
\newcommand{\Dt}{\widetilde{\mathcal D}}
\newcommand{\id}{\mathbbm 1}
\newcommand{\ent}{\mathbin{\rotatebox[origin=c]{90}{\scriptsize$\boldsymbol{\circlearrowleft}$}}}
 \newcommand{\ud}{\mathrm d}
 \newcommand{\ii}{\mathrm i}
 \newcommand{\ex}{\mathrm e}
\usepackage{xcolor}
\usepackage[
  colorlinks=true,
  linkcolor=blue!55!black,
  citecolor=blue!55!black,
  urlcolor=blue!55!black,
  breaklinks=true,
  pdfauthor={Jean-Pierre Gazeau},
  pdftitle={Bi-Oriented Interlaced Matrix Multiplication and an Operator Factorisation of the Quantum Harmonic Oscillator}
]{hyperref}

\title{Bi-Oriented Interlaced Matrix Multiplication and an\\
Operator Factorisation of the Quantum Harmonic Oscillator}

\author{Jean-Pierre Gazeau\thanks{Universit\'e Paris Cit\'e, CNRS,
Astroparticule et Cosmologie, F-75013 Paris, France, and Faculty of
Mathematics, University of Bia\l ystok, 15-245 Bia\l ystok, Poland.
Email: \texttt{gazeau@apc.in2p3.fr; j.gazeau@uwb.edu.pl}}}

\date{\today}

\begin{document}

\maketitle

\begin{abstract}
We introduce a bi-oriented interlaced matrix product on $2\times2$
matrices over a unital, possibly non-associative algebra, extending a
matrix mnemonic for Cayley--Dickson doubling. The opposite orientations
$\diamond_R$ and $\diamond_L$ coincide precisely when the underlying
algebra is commutative. For the canonical position and momentum operators
$Q,P$ on $L^2(\mathbb R)$, a matrix $\mathcal D$ and its companion
$\widetilde{\mathcal D}$ satisfy, on the common Schwartz domain,
$$
\widetilde{\mathcal D}\diamond_R\mathcal D
=\widetilde{\mathcal D}\diamond_L\mathcal D
=2H\,\id_2,\qquad H=\tfrac12(P^2+Q^2).
$$
We study the associated interlaced eigenvalue equations for
operator-valued columns. The combinations $\Phi=A+\ii B$ and
$\Psi=A-\ii B$ decouple the equations into two-sided Sylvester equations.
We construct explicit rank-one generalized eigenoperators from position
and momentum eigenstates. Their distributional kernels belong to the
doubled rigged Hilbert space
$\mathcal S(\mathbb R^2)\subset L^2(\mathbb R^2)
\subset\mathcal S'(\mathbb R^2)$; the corresponding position and momentum
operators act on different tensor factors and therefore commute.
The rank-one families do not exhaust the distributional solutions:
higher-order generalized solutions are also exhibited. We distinguish
the algebraic identities on the Schwartz core from questions concerning
closed-operator realizations.
\end{abstract}

\noindent\textbf{Keywords:} quantum harmonic oscillator
$\cdot$ operator factorisation $\cdot$ rigged Hilbert space
$\cdot$ interlaced matrix product $\cdot$ non-associative algebra
$\cdot$ creation and annihilation operators

\medskip
\noindent\textbf{Mathematics Subject Classification (2020):}
47B25 $\cdot$ 81Q10 $\cdot$ 81R30 $\cdot$ 17A01


\section{Introduction}

This note is offered as a contribution to the topical collection in
memory of Franciszek Hugon Szafraniec. The author had the privilege of
writing two papers with him, on coherent states built from holomorphic
Hermite polynomials and the non-commutative plane~\cite{GazeauSzafraniec2011},
and on the quantum angle operator associated with a number
operator~\cite{GazeauSzafraniec2016}. In both, Szafraniec's insistence on
precise operator-theoretic foundations --- domains, closability,
self-adjointness --- shaped the final result at least as much as the
physical motivation did; see in particular his own rigorous analytic
models for the quantum harmonic oscillator~\cite{Szafraniec1998}. The
present note is written in that spirit: its central object is again the
harmonic oscillator Hamiltonian $H=\tfrac12(P^2+Q^2)$, approached here
through a new algebraic factorisation device.

The octonions and, more generally, the Cayley--Dickson tower may be
written in terms of $2\times2$ block matrices with entries in the
previous algebra of the tower. In that setting, multiplication can be
encoded by a rule in which certain products are taken in their usual
order while others are reversed~\cite{Gazeau2026}; see
\cite{Baez,Schafer,Okubo,SpringerVeldkamp} for background on octonions
and non-associative algebras.

The point of the present note is that this rule should be viewed as an
algebraic object in its own right, independently of octonions. It admits
two natural orientations, $\diamond_R$ and $\diamond_L$ (the first one was denoted by $\ent$ in \cite{Gazeau2026}), to be studied as a pair (Section~\ref{sec:products}), and it applies to matrices with
entries in \emph{any} unital algebra, in particular to matrices of
operators on a Hilbert space. Taking the entries to be the position and
momentum operators produces the main new results of this note:

\begin{itemize}
\item an interlaced factorisation $\widetilde{\mathcal D}\diamond_{R}\mathcal D
=\widetilde{\mathcal D}\diamond_{L}\mathcal D=2H\,\id_2$ of the harmonic
oscillator Hamiltonian (Section~\ref{sec:factorisation}, Proposition~\ref{prop:squares});                            
\item explicit generalized solutions of the interlaced eigenvalue problem
$\mathcal D\diamond_R V=\lambda V$ for operator-valued column vectors $V$,
including rank-one operators built from the
generalised position and momentum eigenstates of a rigged Hilbert space
(Section~\ref{sec:eigen}, Proposition~\ref{prop:eigPhi}), with the
opposite orientation $\diamond_L$ producing the mirror-image
(ket$\leftrightarrow$bra exchanged) family for the same eigenvalues; we
show this is no accident, but the trace of a two-mode eigenvalue problem
hiding inside the one-sided-looking equation $\mathcal D\diamond_\sigma
V=\lambda V$.
\end{itemize}

Section~\ref{sec:products} sets up the algebraic framework;
Section~\ref{sec:structure} records a few structural facts needed later;
Sections~\ref{sec:factorisation}--\ref{sec:eigen} contain the operator-theoretic
core of the note; Section~\ref{sec:outlook} closes with an outlook. 

Although most of the results are elementary, we have chosen to provide explicit proofs for the sake of completeness and clarity.


\section{The two oriented interlaced products}
\label{sec:products}

Let $\A$ be a unital algebra, not assumed commutative and not necessarily
associative.  For
$$
X=
\begin{pmatrix}
A&B\\
C&D
\end{pmatrix},
\qquad
Y=
\begin{pmatrix}
E&F\\
G&H
\end{pmatrix},
$$
with entries in $\A$, define the \emph{right-oriented interlaced product}
$$
\boxed{
X\diamond_R Y
=
\begin{pmatrix}
AE+GB & FA+BH\\
EC+DG & CF+HD
\end{pmatrix}.
}
$$

Reading each entry as a pair (first summand, second summand) the pattern
of orderings is
$$
(1,1): (X_{\cdot 1})(Y_{1\cdot}) + (Y_{2\cdot})(X_{\cdot 2})
\quad\Rightarrow\quad
\text{natural then reversed},
$$
and similarly for the other entries.  Labelling natural order as
$\mathrm{R}$ (right factor of $X$ contracts first) and reversed order as
$\mathrm{L}$, the block pattern is
$$
\boxed{
\begin{array}{cc}
RL&LR\\
LR&RL
\end{array}
}
\qquad
\text{or equivalently}\qquad
\boxed{\mathrm{RLLR}/\mathrm{LRRL}}.
$$


The opposite, or \emph{left-oriented interlaced product} is
$$
\boxed{
X\diamond_L Y
=
\begin{pmatrix}
EA+BG & AF+HB\\
CE+GD & FC+DH
\end{pmatrix}.
}
$$

Its orientation pattern is
$$
\boxed{
\begin{array}{cc}
LR&RL\\
RL&LR
\end{array}
}
\qquad
\text{or equivalently}\qquad
\boxed{\mathrm{LRRL}/\mathrm{RLLR}}.
$$

\begin{remark}[Relation between $\diamond_R$ and $\diamond_L$]
\label{rem:swap}
A direct inspection of the definitions shows that
$$
X\diamond_L Y = X\diamond_R Y
$$
holds \emph{if and only if\/} $\A$ is commutative.  In general, the two
products are distinct.  The relation that does hold unconditionally is
purely formal, and needs no enlargement of the algebraic setting to
state: \emph{if, in every entry of the formula defining $X\diamond_R Y$,
one reverses the order of each individual product ($pq\mapsto qp$, using
the multiplication of $\A$), one obtains exactly the formula defining
$X\diamond_L Y$} --- with $X$ and $Y$ playing the same roles throughout,
neither matrix being swapped or otherwise transformed. This is a direct,
entry-by-entry symmetry between the two boxed definitions above, and can
be checked term by term.
\end{remark}

\begin{definition}
The triple
$$
\mathfrak I(\A)
=
\bigl(M_2(\A),\diamond_R,\diamond_L\bigr)
$$
will be called the \emph{bi-oriented interlaced matrix algebra} over $\A$.
The term ``algebra'' is used loosely: in general neither $\diamond_R$ nor
$\diamond_L$ is associative.
\end{definition}

%

The identity matrix $\id_2$ is a two-sided unit
for both interlaced products:
$$
\id_2\diamond_{R,L}X
=
X\diamond_{R,L}\id_2
=
X,
\qquad X\in M_2(\A).
$$
This follows immediately from the defining formulas.

\section{Further structural properties}
\label{sec:structure}

The two products are not independent; they represent opposite
orientations of the same interlacing principle, formally exchanged by an
involution $\mathcal O$ with $\mathcal O^2=\id$, $\mathcal
O(\diamond_R)=\diamond_L$, $\mathcal O(\diamond_L)=\diamond_R$. By
Remark~\ref{rem:swap}, $\mathcal O$ acts entrywise, by reversing the
order of every individual product appearing in the defining formula. The
natural object is therefore the pair $(\diamond_R,\diamond_L)$, not a
single product.

For operator entries, the two orientations interact with the adjoint
exactly as ordinary matrix multiplication does, \emph{provided} the
adjoint of a $2\times2$ block operator matrix is taken correctly: not
merely by conjugating each entry in place, but by also swapping the
off-diagonal block positions, as it must for any operator on
$\Hh\oplus\Hh$.

\begin{proposition}[Adjoint interchange]
\label{prop:adjoint}
Let $\Hh$ be a Hilbert space and $X=\begin{pmatrix}A&B\\C&D\end{pmatrix}$,
$Y=\begin{pmatrix}E&F\\G&H\end{pmatrix}\in M_2(B(\Hh))$, regarded as
operators on $\Hh\oplus\Hh$~\cite{KadisonRingrose1}. Write
$$
X^*=\begin{pmatrix}A^*&C^*\\B^*&D^*\end{pmatrix}
$$
for the Hilbert-space adjoint of $X$ --- entries conjugated \emph{and}
off-diagonal positions swapped, exactly as for any block operator matrix
--- and similarly for $Y^*$. Then, for either orientation $\sigma\in\{R,L\}$,
$$
(X\diamond_\sigma Y)^* = Y^*\diamond_\sigma X^*.
$$
In particular, if $X$ and $Y$ are self-adjoint as block operators (which
requires $A^*=A$, $D^*=D$, and $C=B^*$ --- not merely $B=C$), then
$$
(X\diamond_\sigma Y)^* = Y\diamond_\sigma X.
$$
\end{proposition}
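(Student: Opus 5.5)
The plan is a direct entry-by-entry comparison, using only two facts about bounded operators on $\Hh$: $(p+q)^*=p^*+q^*$ and $(pq)^*=q^*p^*$. Both hold in $B(\Hh)$ without domain issues, which is why the statement is restricted to $M_2(B(\Hh))$. I treat $\sigma=R$ in full and then say how $\sigma=L$ follows.

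First I compute the left-hand side. By the definition of $\diamond_R$,
$$
X\diamond_R Y=\begin{pmatrix}AE+GB & FA+BH\\ EC+DG & CF+HD\end{pmatrix}.
$$
Taking the block adjoint as prescribed in the statement, I conjugate every entry and swap the off-diagonal positions. This gives
$$
(X\diamond_R Y)^*=\begin{pmatrix}E^*A^*+B^*G^* & C^*E^*+G^*D^*\\ A^*F^*+H^*B^* & F^*C^*+D^*H^*\end{pmatrix}.
$$

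Next I compute the right-hand side by relabelling. I write $Y^*=\begin{pmatrix}E^*&G^*\\F^*&H^*\end{pmatrix}$ and $X^*=\begin{pmatrix}A^*&C^*\\B^*&D^*\end{pmatrix}$. In the defining formula of $\diamond_R$ I then substitute
$$
(A,B,C,D)\mapsto(E^*,G^*,F^*,H^*),\qquad (E,F,G,H)\mapsto(A^*,C^*,B^*,D^*).
$$
The four entries of $Y^*\diamond_R X^*$ come out as follows:
\begin{itemize}
\item entry $(1,1)$ is $E^*A^*+B^*G^*$;
\item entry $(1,2)$ is $C^*E^*+G^*D^*$;
\item entry $(2,1)$ is $A^*F^*+H^*B^*$;
\item entry $(2,2)$ is $F^*C^*+D^*H^*$.
\end{itemize}
These agree term by term with the matrix above.

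For $\sigma=L$ the same substitution into the $\diamond_L$ formula works identically. Alternatively, one can use the entrywise symmetry of Remark~\ref{rem:swap}: reversing the order of each product commutes with the adjoint-and-relabel operation, so the $R$ identity transports to the $L$ identity. I would still write out the four $L$ entries explicitly for completeness.

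The particular case follows by substituting $X^*=X$ and $Y^*=Y$. The parenthetical characterisation of block self-adjointness ($A^*=A$, $D^*=D$, $C=B^*$) comes from comparing the two block matrices for $X$ and $X^*$ entrywise.

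The only real obstacle is bookkeeping. The off-diagonal swap in the block adjoint interacts with the reversal $(pq)^*=q^*p^*$ in exactly the way that turns the pattern $\mathrm{RLLR}/\mathrm{LRRL}$ back into itself. If the swap were forgotten, the identity would fail, and that is the point of the proposition's emphasis on it. I expect the $(1,2)$ and $(2,1)$ entries to need the most care, since those are where both effects occur at once.
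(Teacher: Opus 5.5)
Your proposal is correct and is essentially the paper's own proof. You compute the same block adjoint of $X\diamond_R Y$, substitute the same entries into the defining formula for $Y^*\diamond_R X^*$, match term by term, and treat $\sigma=L$ in the same way; for instance, the $(1,2)$ entries of both sides of the $\diamond_L$ identity equal $E^*C^*+D^*G^*$.

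Your optional shortcut for $\sigma=L$ is also sound. It is most cleanly stated as follows: $\diamond_L$ over $B(\Hh)$ is $\diamond_R$ over the opposite algebra, and there $*$ is still additive and anti-multiplicative, which is all the $R$ computation uses.
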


\begin{proof}
We give the case $\sigma=R$; the case $\sigma=L$ is proved the same way.
Conjugating each entry of $X\diamond_R Y$ and swapping the off-diagonal
block positions gives
$$
(X\diamond_R Y)^*
=
\begin{pmatrix}
E^*A^*+B^*G^* & C^*E^*+G^*D^*\\
A^*F^*+H^*B^* & F^*C^*+D^*H^*
\end{pmatrix}.
$$
On the other hand, applying the defining formula of $\diamond_R$ to
$Y^*=\begin{pmatrix}E^*&G^*\\F^*&H^*\end{pmatrix}$ and
$X^*=\begin{pmatrix}A^*&C^*\\B^*&D^*\end{pmatrix}$ --- note the swapped
off-diagonal entries in each --- gives, entry by entry,
$$
Y^*\diamond_R X^*
=
\begin{pmatrix}
E^*A^*+B^*G^* & C^*E^*+G^*D^*\\
A^*F^*+H^*B^* & F^*C^*+D^*H^*
\end{pmatrix},
$$
which coincides with $(X\diamond_R Y)^*$ term by term. The particular
case follows since $X^*=X$, $Y^*=Y$ under the stated self-adjointness
hypotheses.
\end{proof}

Consequently, for self-adjoint $X,Y$ the symmetrised product
$X\circ_\sigma Y=\tfrac12(X\diamond_\sigma Y+Y\diamond_\sigma X)$ is
itself self-adjoint:
$$
(X\circ_\sigma Y)^*
=\tfrac12\bigl[(X\diamond_\sigma Y)^*+(Y\diamond_\sigma X)^*\bigr]
=\tfrac12\bigl[Y\diamond_\sigma X+X\diamond_\sigma Y\bigr]
=X\circ_\sigma Y,
$$
a genuine Jordan-algebra-type property, in the spirit of the
symmetrised operator product of~\cite{JordanVNW,McCrimmon}; a fuller
comparison with Jordan algebras is left for future work.

Beyond the adjoint, the bi-oriented structure supports left and right
translations $L_X^{R}(Y)=X\diamond_R Y$, $R_X^{R}(Y)=Y\diamond_R X$ (and
similarly for $\diamond_L$), and, correspondingly, four notions of
one-sided inverse and loop-like structure~\cite{Pflugfelder}. Likewise,
since two products coexist, associativity defects come in four flavours
$\mathcal A_{\sigma\tau}(X,Y,Z)=(X\diamond_\sigma Y)\diamond_\tau Z-
X\diamond_\sigma(Y\diamond_\tau Z)$, $\sigma,\tau\in\{R,L\}$; already for
$\sigma=\tau$ and $\A$ associative, $\mathcal A_{RR}$ and $\mathcal A_{LL}$
do not vanish identically, so the interlaced products remain
non-associative even over an associative base algebra. We record a single
representative problem and otherwise leave this line of investigation for
a separate, purely algebraic study.

\begin{problem}
Classify the subspaces of $M_2(\A)$ on which one or several of the four
associators $\mathcal A_{RR},\mathcal A_{RL},\mathcal A_{LR},\mathcal
A_{LL}$ vanish, and identify the interlaced analogue of the alternative
identities $[x,x,y]=0$, $[y,x,x]=0$.
\end{problem}


\section{An interlaced factorisation of $P^2+Q^2$}
\label{sec:factorisation}

Let $Q$ and $P$ be the usual position and momentum operators on
$L^2(\mathbb R)$,
$$
(Qf)(x)=xf(x),
\qquad
(Pf)(x)=-\ii\frac{\ud}{\ud x}f(x),
$$
initially defined, and essentially self-adjoint, on the Schwartz space
$\mathcal S(\mathbb R)$~\cite{ReedSimon2}, on which they act invariantly
and satisfy the canonical commutation relation $[Q,P]=\ii\id_{\mathcal{H}}$.  Define
$$
\D=
\begin{pmatrix}
Q&-P\\
P&Q
\end{pmatrix},
\qquad
\Dt=
\begin{pmatrix}
Q&P\\
-P&Q
\end{pmatrix}.
$$

\begin{proposition}[Interlaced squares]
\label{prop:squares}
On $\mathcal S(\mathbb R)\oplus\mathcal S(\mathbb R)$,
$$
\D\diamond_R\D
=
\begin{pmatrix}
Q^2-P^2&-2PQ\\
2QP&Q^2-P^2
\end{pmatrix},
\qquad
\D\diamond_L\D
=
\begin{pmatrix}
Q^2-P^2&-2QP\\
2PQ&Q^2-P^2
\end{pmatrix}.
$$
The mixed products, by contrast, are orientation-independent:
$$
\Dt\diamond_R\D
=
\Dt\diamond_L\D
=
(Q^2+P^2)\,\id_2.
$$
\end{proposition}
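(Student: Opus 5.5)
The plan is direct substitution into the boxed formulas for $\diamond_R$ and $\diamond_L$. Everything happens on $\mathcal S(\mathbb R)\oplus\mathcal S(\mathbb R)$. Since $Q$ and $P$ leave $\mathcal S(\mathbb R)$ invariant, every product such as $QP$, $PQ$, $P^2$, $Q^2$ is a well-defined operator from $\mathcal S(\mathbb R)$ into itself. On this common invariant domain the operators form an associative algebra, so the formal entry computations are legitimate identities of operators on $\mathcal S(\mathbb R)$. I will state this domain remark first, so that no closure or self-adjointness question arises.

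For the squares, I take $X=Y=\D$, so $A=E=Q$, $B=F=-P$, $C=G=P$, $D=H=Q$. For $\diamond_R$ the entries are:
\begin{itemize}
\item $(1,1)$: $AE+GB=Q^2+P(-P)=Q^2-P^2$;
\item $(1,2)$: $FA+BH=-PQ-PQ=-2PQ$;
\item $(2,1)$: $EC+DG=QP+QP=2QP$;
\item $(2,2)$: $CF+HD=-P^2+Q^2$.
\end{itemize}
For $\diamond_L$ the entries are:
\begin{itemize}
\item $(1,1)$: $EA+BG=Q^2-P^2$;
\item $(1,2)$: $AF+HB=-QP-QP=-2QP$;
\item $(2,1)$: $CE+GD=PQ+PQ=2PQ$;
\item $(2,2)$: $FC+DH=-P^2+Q^2$.
\end{itemize}
This gives the two displayed matrices. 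They differ exactly by $QP\leftrightarrow PQ$ off the diagonal, consistent with Remark~\ref{rem:swap}. Their difference is governed by $[Q,P]=\ii\id$, so they are distinct.

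For the mixed products, I take $X=\Dt$ and $Y=\D$, so $A=Q$, $B=P$, $C=-P$, $D=Q$ and $E=Q$, $F=-P$, $G=P$, $H=Q$. For $\diamond_R$:
\begin{itemize}
\item $(1,1)$: $AE+GB=Q^2+P^2$;
\item $(1,2)$: $FA+BH=-PQ+PQ=0$;
\item $(2,1)$: $EC+DG=-QP+QP=0$;
\item $(2,2)$: $CF+HD=P^2+Q^2$.
\end{itemize}
For $\diamond_L$:
\begin{itemize}
\item $(1,1)$: $EA+BG=Q^2+P^2$;
\item $(1,2)$: $AF+HB=-QP+QP=0$;
\item $(2,1)$: $CE+GD=-PQ+PQ=0$;
\item $(2,2)$: $FC+DH=P^2+Q^2$.
\end{itemize}
Hence both equal $(Q^2+P^2)\id_2=2H\,\id_2$.

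The key observation is that in the mixed products each off-diagonal entry pairs two terms with the \emph{same} ordering of $P$ and $Q$ and opposite signs. The cancellation therefore never uses the commutation relation, which explains the orientation-independence.

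There is no real obstacle. The only point needing care is sign bookkeeping in the off-diagonal entries, together with the domain justification that the identities hold as operators on $\mathcal S(\mathbb R)$ rather than merely formally. I would verify the cancellations for $(1,2)$ and $(2,1)$ explicitly, since they carry the content of the orientation-independence claim.
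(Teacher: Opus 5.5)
Your proposal is correct and follows the paper's proof: entry-by-entry substitution into the defining formulas of $\diamond_R$ and $\diamond_L$, using the same assignment $A=D=Q$, $B=P$, $C=-P$, $E=H=Q$, $F=-P$, $G=P$ for the mixed product. You also write out the $\D\diamond_{R,L}\D$ entries and the mixed $\diamond_L$ case, which the paper leaves as straightforward. Your observation that the off-diagonal cancellations never use $[Q,P]=\ii\id$ matches the paper's own later remark that the proof uses no commutation property of $Q,P$.
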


\begin{proof}
The computations of $\D\diamond_{R,L}\D$ are straightforward applications
of the definitions.  We detail only the mixed product $\Dt\diamond_R\D$,
with $\Dt=\bigl(\begin{smallmatrix}Q&P\\-P&Q\end{smallmatrix}\bigr)$
playing the role of $X=\bigl(\begin{smallmatrix}A&B\\C&D\end{smallmatrix}\bigr)$
and $\D$ the role of $Y=\bigl(\begin{smallmatrix}E&F\\G&H\end{smallmatrix}\bigr)$,
so that $A=D=Q$, $B=P$, $C=-P$, $E=H=Q$, $F=-P$, $G=P$.

The $(1,1)$ entry is $AE+GB = Q^2+P^2$; the $(1,2)$ entry is
$FA+BH=-PQ+PQ=0$; the $(2,1)$ entry is $EC+DG=-QP+QP=0$; the $(2,2)$
entry is $CF+HD=P^2+Q^2$. The computation for $\Dt\diamond_L\D$ gives the
same result, the roles of $QP$ and $PQ$ in the off-diagonal entries being
exchanged but their sum still vanishing.
\end{proof}

Setting $H=\tfrac12(P^2+Q^2)$ for the harmonic oscillator Hamiltonian,
$$
\boxed{
\Dt\diamond_R\D
=
\Dt\diamond_L\D
=
2H\,\id_2.
}
$$
This is the interlaced analogue of the classical factorisation
$(x-\ii y)(x+\ii y)=x^2+y^2$, with the ordinary scalar product replaced by the
bi-oriented interlaced matrix product; here it is orientation-independent
even though $\D\diamond_R\D\neq\D\diamond_L\D$. Accordingly, $\D$ may be
regarded as an \emph{interlaced square root} of $2H\,\id_2$, and, in this
precise sense, as an interlaced Dirac-type operator for the harmonic
oscillator.

\begin{remark}[On rigour]
\label{rem:rigour}
As $Q$ and $P$ are unbounded, Proposition~\ref{prop:squares} is, strictly
speaking, an identity of operators restricted to
$\mathcal S(\mathbb R)\oplus\mathcal S(\mathbb R)$, the natural common
invariant core on which $\D$ and $\Dt$ act and on which $Q,P$ are
essentially self-adjoint~\cite{ReedSimon2}. Promoting it to an identity
between closed operators on $L^2(\mathbb R)\oplus L^2(\mathbb R)$, and
settling the self-adjointness or normality of $\D$ itself with respect to
a suitable Hilbert-space structure, is exactly the kind of question that
a fully rigorous treatment must address; it is in the spirit of
Szafraniec's own operator models for the harmonic
oscillator~\cite{Szafraniec1998} and of our joint constructions of
rigorously defined coherent-state and angle operators associated with
$H$~\cite{GazeauSzafraniec2011,GazeauSzafraniec2016}. We record the two
natural questions as open problems.
\end{remark}

\begin{problem}
\label{prob:selfadjoint}
Is $\D$ essentially self-adjoint, skew-adjoint, or normal with respect to
a natural Hilbert-space structure on $L^2(\mathbb R)\oplus L^2(\mathbb R)$?
\end{problem}

\begin{problem}
\label{prob:closure}
Can the interlaced identity $\Dt\diamond_R\D=2H\,\id_2$ be promoted from a
formal identity on $\mathcal S(\mathbb R)\oplus\mathcal S(\mathbb R)$ to
an identity between closed operators on $L^2(\mathbb R)\oplus L^2(\mathbb R)$?
\end{problem}


\section{Creation and annihilation operators}
\label{sec:ladder}

Introduce the standard creation and annihilation operators
$$
a=\frac{Q+\ii P}{\sqrt2},
\qquad
a^\dagger=\frac{Q-\ii P}{\sqrt2},
$$
satisfying $[a,a^\dagger]=\id_{\mathcal{H}}$ and $H=a^\dagger a+\tfrac12 \id_{\mathcal{H}}$.
In terms of $a$ and $a^\dagger$,
$$
\D
=
\frac{1}{\sqrt2}
\begin{pmatrix}
a+a^\dagger & \ii(a-a^\dagger)\\
-\ii(a-a^\dagger) & a+a^\dagger
\end{pmatrix},
\qquad
\Dt
=
\frac{1}{\sqrt2}
\begin{pmatrix}
a+a^\dagger & -\ii(a-a^\dagger)\\
\ii(a-a^\dagger) & a+a^\dagger
\end{pmatrix}.
$$
The interlaced identity $\Dt\diamond_R\D = 2H\id_2$ then becomes, after
substitution,
$$
\Dt\diamond_R\D
=
(2a^\dagger a + \id_{\mathcal{H}})\,\id_2,
$$
which exhibits $\D$ as the ``square root'' of the number operator matrix
$N \id_2$ (where $N=a^\dagger a$) in the bi-oriented interlaced sense, in
analogy with the Dirac square root of the Klein--Gordon
operator~\cite{Dirac}. A systematic representation of the
Weyl--Heisenberg algebra through $\diamond_{R,L}$ remains to be developed.


\section{The operator eigenvalue problem for $\mathcal{D}$}
\label{sec:eigen}

We now study the interlaced eigenvalue equation for the matrix $\D$,
where the ``eigenvectors'' are themselves $2\times1$ column matrices
of operators. Both orientations turn out to reduce to Sylvester-type
equations, solved by rank-one operators built from the (generalised)
position and momentum eigenstates of a rigged Hilbert space; this is
the second main result of the contribution.

\subsection*{Extension of $\diamond_{R,L}$ to column vectors}

For a $2\times1$ column $V=\bigl(\begin{smallmatrix}A\\B\end{smallmatrix}\bigr)$
with operator entries $A,B$, we extend $\diamond_R$ and $\diamond_L$ by
setting the second column of the right factor to zero in the $2\times2$
formulae.  A direct reading of the definitions gives
\begin{align*}
\D\diamond_R V &=
\begin{pmatrix}
QA - BP\\
AP + QB
\end{pmatrix},
&
\D\diamond_L V &=
\begin{pmatrix}
AQ - PB\\
PA + BQ
\end{pmatrix}.
\end{align*}
The difference between the two orientations is now clean and uniform:
under $\diamond_R$, $Q$ multiplies \emph{from the left} and $P$
multiplies \emph{from the right}, whichever of $A,B$ it acts on
($QA$, $-BP$, $AP$, $QB$); under $\diamond_L$ the two roles are
exchanged, $P$ always from the left and $Q$ always from the right
($AQ$, $-PB$, $PA$, $BQ$). This sharpens, for the vector case, the same
left/right asymmetry that underlies the RLLR and LRRL
patterns of Section~\ref{sec:products}.

\subsection*{The two eigenvalue systems}

The interlaced eigenvalue equations
$\D\diamond_R V = \lambda V$ and $\D\diamond_L V = \lambda V$,
for $\lambda\in\mathbb{C}$, yield the operator systems:
$$
(\diamond_R):\quad
\begin{cases}
QA - BP = \lambda A,\\
AP + QB = \lambda B,
\end{cases}
\qquad
(\diamond_L):\quad
\begin{cases}
AQ - PB = \lambda A,\\
PA + BQ = \lambda B.
\end{cases}
$$

\subsection*{Complexification}

Set $\Phi = A + \ii B$, $\Psi = A - \ii B$, so $A=(\Phi+\Psi)/2$,
$B=(\Phi-\Psi)/(2\ii)$. Combining row~1 and $\ii\cdot$row~2 of each
system (for $\Phi$), and row~1 minus $\ii\cdot$row~2 (for $\Psi$), gives
after a direct computation
$$
(\diamond_R):\quad Q\Phi+\ii\,\Phi P=\lambda\Phi,\qquad Q\Psi-\ii\,\Psi P=\lambda\Psi,
$$
$$
(\diamond_L):\quad \Phi Q+\ii\,P\Phi=\lambda\Phi,\qquad \Psi Q-\ii\,P\Psi=\lambda\Psi.
$$
Equivalently, in terms of the ladder operators $a=(Q+\ii P)/\sqrt2$,
$a^\dagger=(Q-\ii P)/\sqrt2$,
\begin{equation}
\label{eq:eigR}
\boxed{
(\diamond_R):\quad
\tfrac{1}{\sqrt2}\bigl(\{\Phi,a\}-[\Phi,a^\dagger]\bigr)=\lambda\Phi,
\qquad
\tfrac{1}{\sqrt2}\bigl(\{\Psi,a^\dagger\}-[\Psi,a]\bigr)=\lambda\Psi,
}
\end{equation}
\begin{equation}
\label{eq:eigL}
\boxed{
(\diamond_L):\quad
\tfrac{1}{\sqrt2}\bigl(\{\Phi,a\}+[\Phi,a^\dagger]\bigr)=\lambda\Phi,
\qquad
\tfrac{1}{\sqrt2}\bigl(\{\Psi,a^\dagger\}+[\Psi,a]\bigr)=\lambda\Psi.
}
\end{equation}
In both orientations $\Phi$ and $\Psi$ are \emph{uncoupled from each
other} --- each equation involves only one of the two --- but neither
reduces to a one-sided eigenvalue equation of the familiar $a\Phi=\mu\Phi$
type: each is a genuine \emph{Sylvester equation}, with one of $Q,P$
acting on $\Phi$ (or $\Psi$) from the left and the other from the right.
The two orientations are mirror images of one another under left/right
exchange: $\diamond_R$ has $Q$ on the left and $P$ on the right;
$\diamond_L$ has the roles reversed.

\subsection*{Solution via the rigged Hilbert space}

For the generalized position and momentum eigenstates we use the
one-particle Gelfand triplet
$$
\mathcal S(\mathbb R)\subset\Hh=L^2(\mathbb R)
\subset\mathcal S'(\mathbb R),
$$
with its generalised position and momentum eigenkets
$|q\rangle,|p\rangle\in\mathcal S'(\mathbb R)$, $q,p\in\mathbb R$,
satisfying $Q|q\rangle=q|q\rangle$ and $\langle p|P=p\langle p|$ in the
distributional
sense~\cite{GelfandVilenkin4,Bohm1978,AntoineTrapani2009,AntoineInoueTrapani2002}.
For the functional-analytic framework of rigged Hilbert spaces and the
algebraic treatment of unbounded operators we refer to the monographs of
Antoine, Trapani and
collaborators~\cite{AntoineTrapani2009,AntoineInoueTrapani2002}.

The Schwartz kernel theorem provides the natural framework for the
operator-valued eigenvalue problem considered here~\cite{Treves1967}:
every continuous linear map $T:\mathcal S(\mathbb R)\to\mathcal
S'(\mathbb R)$ is uniquely represented by a tempered distribution
$K_T\in\mathcal S'(\mathbb R^2)$ through
$$
\langle T\varphi,\psi\rangle
=
\langle K_T,\psi\otimes\varphi\rangle,
\qquad
\varphi,\psi\in\mathcal S(\mathbb R).
$$
Consequently, the operator-valued eigenvalue problem can be formulated
as a distributional differential equation on $\mathbb R^2$: writing
$\Phi(x,y)$ for the kernel of an operator $\Phi:\mathcal S(\mathbb
R)\to\mathcal S'(\mathbb R)$, $Q\Phi$ acts on it as multiplication by
$x$ and $\Phi P$ as $\ii\partial_y$.

This correspondence naturally leads to the doubled rigged Hilbert space
$$
\mathcal S(\mathbb R^2)
\subset L^2(\mathbb R^2)
\subset\mathcal S'(\mathbb R^2),
$$
where $L^2(\mathbb R^2)$ is identified with the Hilbert space of
Hilbert--Schmidt operators on $L^2(\mathbb R)$. The generalized
eigenoperators considered below belong to the distributional extension
of this Hilbert space and need not be Hilbert--Schmidt operators:
precisely, the rank-one solutions below are generalized kernels and not
Hilbert--Schmidt, since the corresponding multiplication operators on
$L^2(\mathbb R^2)$ have no nonzero square-integrable eigenfunctions. We
make no claim here about all possible bounded-operator solutions without
additional domain hypotheses.

\begin{proposition}[Rigged eigenoperators of the $\diamond_R$ system]
\label{prop:eigPhi}
Write $\lambda=q+\ii p\in\mathbb C$, $q,p\in\mathbb R$. The rank-one
operators
$$
\Phi=|q\rangle\langle p|,
\qquad
\Psi=|q\rangle\langle -p|
$$
solve $Q\Phi+\ii\Phi P=\lambda\Phi$ and $Q\Psi-\ii\Psi P=\lambda\Psi$
respectively, for every $q,p\in\mathbb R$.
\end{proposition}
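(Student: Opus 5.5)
The plan is to verify both identities directly from the eigen-relations $Q|q\rangle=q|q\rangle$ and $\langle p|P=p\langle p|$, and then to confirm them at the level of Schwartz kernels on $\mathbb R^2$, so that every product is a well-defined tempered distribution.

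\textbf{Formal step.} Apply the eigen-relations one at a time. For $\Phi=|q\rangle\langle p|$ we have
\[
Q\Phi=(Q|q\rangle)\langle p|=q\Phi,
\qquad
\Phi P=|q\rangle(\langle p|P)=p\Phi,
\]
hence $Q\Phi+\ii\Phi P=(q+\ii p)\Phi=\lambda\Phi$. For $\Psi=|q\rangle\langle -p|$ we get $Q\Psi=q\Psi$ and $\Psi P=-p\Psi$, hence $Q\Psi-\ii\Psi P=(q+\ii p)\Psi=\lambda\Psi$. Note that the sign change in the bra exactly compensates the sign change in front of $\ii$.

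\textbf{Rigorous step.} To make this rigorous rather than formal, I will use the kernel-theorem dictionary set up just before the statement. An operator $\Phi:\mathcal S(\mathbb R)\to\mathcal S'(\mathbb R)$ has kernel $\Phi(x,y)\in\mathcal S'(\mathbb R^2)$. On kernels, $Q\Phi$ is multiplication by $x$, and $\Phi P$ is $\ii\partial_y\Phi$. The latter sign comes from transposing $P=-\ii\,\ud/\ud y$ onto the test function $\varphi$ and integrating by parts.

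The kernel of $|q\rangle\langle p|$ is the tensor product $\delta(x-q)\otimes\overline{e_p(y)}$, where $e_p(y)=(2\pi)^{-1/2}\ex^{\ii py}$. So the kernel is
\[
\Phi(x,y)=(2\pi)^{-1/2}\,\delta(x-q)\,\ex^{-\ii py}.
\]
This is a tempered distribution on $\mathbb R^2$, because a tensor product of tempered distributions is tempered. I then check two distributional identities:
\[
x\,\delta(x-q)=q\,\delta(x-q),
\qquad
\ii\partial_y\ex^{-\ii py}=p\,\ex^{-\ii py}.
\]
Multiplication by $x$ acts only on the first factor and $\partial_y$ only on the second, so these combine to give $x\Phi+\ii(\ii\partial_y)\Phi=(q-p)\Phi$. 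That does not match $\lambda\Phi$, so I must recheck the convention.

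\textbf{Main obstacle: sign conventions.} The key point is that ``$\Phi P$ as $\ii\partial_y$'' refers to the product appearing in the equation, which is $\ii\,\Phi P$. The term $\ii\,\Phi P$ should produce $\ii p\Phi$. I will fix this by not relying on a preconverted rule, and instead deriving the action of $\Phi P$ directly from the pairing $\langle\Phi P\varphi,\psi\rangle$. The idea is to write $\langle p|P\varphi\rangle=\int \overline{e_p}\,(-\ii\varphi')$ and integrate by parts, which gives $p\langle p|\varphi\rangle$ with no further sign. This verifies $\langle p|P=p\langle p|$ as functionals on $\mathcal S(\mathbb R)$.

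Once that is established, $\Phi P\varphi=|q\rangle\,\langle p|P\varphi\rangle=p\,\Phi\varphi$ for all $\varphi\in\mathcal S(\mathbb R)$. This is exactly the operator identity needed, since $P$ leaves $\mathcal S(\mathbb R)$ invariant. Similarly, $Q\Phi\varphi=\langle p|\varphi\rangle\,Q|q\rangle$, where $Q$ is extended to $\mathcal S'(\mathbb R)$ by duality and $Q\delta_q=q\delta_q$.

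The $\Psi$ case is the same computation with $p$ replaced by $-p$. Finally, I will note that all compositions are well defined as maps $\mathcal S(\mathbb R)\to\mathcal S'(\mathbb R)$ for every $q,p\in\mathbb R$.
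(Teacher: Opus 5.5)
Your formal step is the paper's proof, and your duality argument (integration by parts giving $\langle p|P=p\langle p|$ on $\mathcal S(\mathbb R)$, plus $Q\delta_q=q\delta_q$ by duality) correctly makes it rigorous as an identity of maps $\mathcal S(\mathbb R)\to\mathcal S'(\mathbb R)$. That part is fine.

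The kernel detour, however, contains an arithmetic slip and a wrong diagnosis, and you should delete it rather than work around it. The dictionary ``$\Phi P\leftrightarrow\ii\partial_y$'' is correct. For $\Phi=(2\pi)^{-1/2}\delta(x-q)\,\ex^{-\ii py}$ you correctly found $\ii\partial_y\Phi=p\Phi$. Multiplying by the remaining $\ii$ then gives
$$x\Phi+\ii(\ii\partial_y\Phi)=q\Phi+\ii p\Phi=\lambda\Phi,$$
not $(q-p)\Phi$. Equivalently, $\partial_y\Phi=-\ii p\Phi=(x-\lambda)\Phi$, which is exactly the kernel form $\partial_y\Phi=(x-\lambda)\Phi$ the paper uses right after the proposition.

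Your proposed ``key point'', that $\ii\Phi P$ rather than $\Phi P$ corresponds to $\ii\partial_y$, is false: it would produce $(q+p)\Phi$. There is no sign-convention obstacle, so the kernel route works directly once the multiplication is done correctly.
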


\begin{proof}
Since $Q|q\rangle=q|q\rangle$ and $\langle p|P=p\langle p|$,
$$
Q\Phi+\ii\Phi P=q\Phi+\ii p\Phi=(q+\ii p)\Phi=\lambda\Phi;
$$
similarly $\langle -p|P=-p\langle -p|$ gives
$Q\Psi-\ii\Psi P=q\Psi-\ii(-p)\Psi=(q+\ii p)\Psi=\lambda\Psi$.
\end{proof}

\begin{remark}[Uniqueness, and a Jordan-chain caveat]
In kernel form, $Q\Phi+\ii\Phi P=\lambda\Phi$ reads
$\partial_y\Phi(x,y)=(x-\lambda)\Phi(x,y)$. For fixed $x$ the general
solution of this ODE in $y$ is $C(x)\,\ex^{(x-\lambda)y}$, which is
tempered in $y$ only if $\mathrm{Re}(x-\lambda)=0$, i.e.\ $x=q$; hence
$\Phi(x,\cdot)$ vanishes for $x\neq q$, and $\Phi$ is a distribution in
$x$ supported at the single point $q$. The simplest such distributions
are multiples of $\delta(x-q)$, giving exactly $\Phi=c\,|q\rangle\langle
p|$ as above. A distribution supported at a point need not be a pure
delta, however: using $x\,\delta'(x-q)=q\,\delta'(x-q)-\delta(x-q)$, one
checks directly that
$$
\Phi = c_1\,\delta'(x-q)\,\ex^{-\ii py}
     - c_1\,y\,\delta(x-q)\,\ex^{-\ii py}
     + c_0\,\delta(x-q)\,\ex^{-\ii py}
$$
also solves the equation for any $c_0,c_1\in\mathbb C$, and this
Jordan-chain extends to every order $\delta^{(k)}(x-q)$, $k\ge0$. So
$\Phi=|q\rangle\langle p|$ is the simplest, but not the only, solution
for a given $\lambda$; a complete classification is left for future work
(Problem~\ref{prob:classify}).
\end{remark}

\subsection*{Recovering $A$ and $B$}

Taking the rank-one solutions of Proposition~\ref{prop:eigPhi},
$\Phi=\alpha\,|q\rangle\langle p|$ and $\Psi=\beta\,|q\rangle\langle -p|$
for arbitrary $\alpha,\beta\in\mathbb C$, and inverting
$A=(\Phi+\Psi)/2$, $B=(\Phi-\Psi)/(2\ii)$, gives the $\diamond_R$
eigenvector
\begin{equation}
\label{eq:eigvec}
\boxed{
V_{q,p}(\alpha,\beta)
=\frac12\,|q\rangle
\begin{pmatrix}
\alpha\langle p|+\beta\langle -p|\\[2pt]
-\ii\bigl(\alpha\langle p|-\beta\langle -p|\bigr)
\end{pmatrix},
\qquad
\lambda=q+\ii p,\quad q,p\in\mathbb R,\ \alpha,\beta\in\mathbb C.
}
\end{equation}
The operators  $\Phi$ and $\Psi$
carry no bounded/Hilbert--Schmidt normalisability condition to impose on
one another --- neither is normalisable to begin with --- so both survive
independently: the displayed rank-one family for each $\lambda=q+\ii p$ spans a
two-complex-dimensional subspace when $p\ne0$, spanned by $(\alpha,\beta)=(1,0)$ and $(0,1)$,
but it does not exhaust the distributional solution space, as the previous remark shows.
For $p=0$ the two displayed rank-one operators coincide and this subspace
is one-dimensional.

\subsection*{The $\diamond_L$ case, and comparison}

The same argument, with the roles of ket and bra exchanged (since
$\diamond_L$ places $Q,P$ on the opposite side of $\Phi,\Psi$ relative to
$\diamond_R$), gives
$$
\Phi=\alpha\,|p\rangle\langle q|,
\qquad
\Psi=\beta\,|-p\rangle\langle q|
$$
as the rank-one $\diamond_L$ solutions for the same $\lambda=q+\ii p$.
The two orientations are thus honest mirror images of one another ---
$\diamond_R$ fixes the position label on the ket and carries the
eigenvalue's imaginary part on the momentum label of the bra,
$\diamond_L$ does the opposite --- rather than the
``decouples/does-not-decouple'' asymmetry originally conjectured.

\begin{center}
\renewcommand{\arraystretch}{1.5}
\begin{tabular}{lll}
\hline
& $\diamond_R$ & $\diamond_L$\\
\hline
System & $Q\Phi+\ii\Phi P=\lambda\Phi$, & $\Phi Q+\ii P\Phi=\lambda\Phi$,\\
       & $Q\Psi-\ii\Psi P=\lambda\Psi$  & $\Psi Q-\ii P\Psi=\lambda\Psi$\\
$\Phi,\Psi$ coupled? & No (each equation involves only one) & No\\
Type & Sylvester (two-sided) & Sylvester (two-sided)\\
Rank-one solution & $\Phi=|q\rangle\langle p|,\ \Psi=|q\rangle\langle -p|$
                  & $\Phi=|p\rangle\langle q|,\ \Psi=|-p\rangle\langle q|$\\
\hline
\end{tabular}
\end{center}

\subsection*{Physical interpretation: a two-mode reformulation}

Why should $Q$ acting on one side of $\Phi$ and $P$ on the other, with no
cross-term, produce \emph{sharp} simultaneous eigenstates, when the
naïve (but incorrect) one-sided equation $a\Phi=\mu\Phi$ of a first
attempt would instead force $\Phi$ to be a genuine coherent state, fuzzy
in both $Q$ and $P$ by the Heisenberg amount? The two-sided equation is
secretly a two-mode problem in disguise.

Identify a Hilbert--Schmidt operator with its kernel in
$L^2(\mathbb R^2)\simeq\Hh\otimes\Hh$; extend the correspondence to
tempered distribution kernels when the operations are defined.
The generalized vector $|\Phi\rangle\rangle$ is therefore in
$\mathcal S'(\mathbb R^2)$, not generally in $\Hh\otimes\Hh$. Left
multiplication by $Q$ acts as $Q\otimes \id_{\mathcal{H}}$; right multiplication by $P$
acts, after this identification, as $-(\id_{\mathcal{H}}\otimes P)$, the sign coming
from $P^{\!\top}=-P$ (the position-representation transpose of
$P=-\ii\,\ud/\ud x$, reflecting its time-reversal-odd character; both this
and the underlying vectorisation identities are elementary to check
directly). The $\diamond_R$ equation for $\Phi$ becomes exactly
$$
\bigl(Q^{(1)}-\ii P^{(2)}\bigr)\,|\Phi\rangle\rangle=\lambda\,|\Phi\rangle\rangle,
$$
a genuine two-mode eigenvalue problem, $Q^{(1)}$ acting on the first
copy of $\Hh$ and $P^{(2)}$ on the second. Since $Q^{(1)}$ and $P^{(2)}$
live on \emph{different} tensor factors, they commute exactly,
$[Q^{(1)},P^{(2)}]=0$, in sharp contrast to $[Q,P]=\ii \id_{\mathcal{H}}$ within a single
copy. There is therefore no Heisenberg obstruction between them, and
they admit a generalized joint eigenfamily
$|q\rangle\otimes|{-p}\rangle$ for every $q,p\in\mathbb R$ --- which is
exactly $\Phi=|q\rangle\langle p|$ read back as an operator. The equation
for $\Psi$ is the conjugate statement,
$(Q^{(1)}+\ii P^{(2)})|\Psi\rangle\rangle=\lambda|\Psi\rangle\rangle$,
with eigenbasis $|q\rangle\otimes|p\rangle$, i.e.\ $\Psi=|q\rangle\langle -p|$.

This is the precise sense in which the present result differs from, yet
shares the same eigenvalue $\lambda=q+\ii p$ as, the 
coherent-state case: $a=(Q+\ii P)/\sqrt2$ is a genuinely
\emph{single}-mode operator, subject to the full uncertainty relation,
so an equation $a\Phi=\mu\Phi$ can only be satisfied by the best
available compromise, the coherent state $|\mu\rangle$; whereas
$\mathcal D\diamond_\sigma V=\lambda V$, once correctly derived, is a
\emph{two}-mode equation whose two halves never interact, and can
therefore be satisfied exactly, by the classical phase-space point
itself. Exchanging $\diamond_R\leftrightarrow\diamond_L$ exchanges which
tensor factor carries $Q$ and which carries $P$, matching the
ket$\leftrightarrow$bra exchange already noted above. In this language
the higher-order distributional solutions in the earlier remark are
generalized eigenvectors of the combined operator
$Q^{(1)}-\ii P^{(2)}$, not necessarily simultaneous eigenvectors of
$Q^{(1)}$ and $P^{(2)}$ separately --- a phenomenon
already familiar from the spectral theory of rigged Hilbert spaces, and
the natural framework in which to pursue Problem~\ref{prob:classify}.

\subsection*{A one-dimensional Fourier transform makes it explicit}

The two-mode picture above can be made completely concrete --- and the
proof of Proposition~\ref{prop:eigPhi} correspondingly elementary --- by
an ordinary one-dimensional Fourier transform in the \emph{second}
kernel variable alone. For $\Phi:\mathcal S(\mathbb R)\to\mathcal
S'(\mathbb R)$ with kernel $\Phi(x,y)$, set
$$
\hat\Phi(x,k)=\int_{\mathbb R}\Phi(x,y)\,\ex^{-\ii ky}\,\ud y.
$$
Since $\partial_y\mapsto \ii k$ under this transform, the kernel
equation $\partial_y\Phi=(x-\lambda)\Phi$ established in the proof of
Proposition~\ref{prop:eigPhi} becomes simply
$$
\boxed{(x-\ii k)\,\hat\Phi(x,k)=\lambda\,\hat\Phi(x,k),}
$$
an ordinary eigenvalue equation for a \emph{multiplication} operator on
$L^2(\mathbb R^2)$. Its solution is immediate: $\hat\Phi$ must be
supported at the unique point where $x-\ii k=\lambda$, i.e., writing
$\lambda=q+\ii p$, at $(x,k)=(q,-p)$ --- recovering
$\Phi=|q\rangle\langle p|$ without solving any differential equation.

Identifying $\mathbb R^2_{(x,k)}$ with $\mathbb C$ via $z=x+\ii k$, the
operator $Q\Phi+\ii\Phi P$ becomes, after this single transform,
multiplication by $\bar z$; the rank-one eigenoperators are point masses
$\delta^{(2)}(z-\bar\lambda)$, and $Q,P$ have literally become the two
components of the ordinary position operator on $L^2(\mathbb R^2)$. In
this precise sense the interlaced eigenvalue problem is a quantum
mechanics built directly on the classical phase plane $(q,p)$ itself, in
the spirit of the phase-space wave-function formalism of Torres-Vega and
Frederick~\cite{TorresVegaFrederick1990}, of the common-Liouville-space
approach of Ali and Prugove\v{c}ki~\cite{AliPrugovecki1977}, and of the
broader programme of coherent states, covariant phase-space observables
and joint position--momentum measurements developed by Ali, Beneduci and
their collaborators~\cite{AliAntoineGazeau2014,BeneduciSchroeck2013}.

This also sharpens the Jordan-chain remark above: transforming the
$k=1$ solution the same way gives
$$
\hat\Phi_{1} \propto c_0\,\delta^{(2)}(z-\bar\lambda)
+c_1\,\partial_z\delta^{(2)}(z-\bar\lambda),
\qquad
\partial_z=\tfrac12(\partial_x-\ii\partial_k),
$$
a derivative in the \emph{holomorphic} direction only, not the generic
two-parameter grid of $\partial_x,\partial_k$ derivatives a
point-supported distribution in $\mathbb R^2$ would allow in general ---
because the operator in question is multiplication by $\bar z$
specifically, not by $x$ or $k$ separately. This identifies
Problem~\ref{prob:classify} with the classical ``division problem'' for
the holomorphic tower $\partial_z^j\delta^{(2)}(z-z_0)$, $j\ge0$, a
promising route to a complete answer.

\begin{problem}
\label{prob:classify}
Classify completely the solution space of $Q\Phi+\ii\Phi P=\lambda\Phi$ in
$\mathcal S'(\mathbb R^2)$: is it exactly the Jordan-chain tower generated
by $\delta^{(k)}(x-q)\,\ex^{-\ii py}$, $k\ge0$, or are there further
solutions not of this form?
\end{problem}

\begin{problem}
Generalise the eigenvalue problem to
$\mathcal D\diamond_R\mathcal D\diamond_R V=\lambda^2 V$
using Proposition~\ref{prop:squares} and equation~\eqref{eq:eigR}, and
interpret the result in terms of the harmonic oscillator number
eigenstates.
\end{problem}


\section{Outlook}
\label{sec:outlook}

Five further directions seem worth recording briefly. First, the
non-associativity noted in Section~\ref{sec:structure} raises the natural
question of a universal associative envelope generated by the left and
right translations of $\diamond_R,\diamond_L$; for operator entries such
an envelope would presumably sit inside $B(\Hh)$, but making this precise
is left open. Second, the two-parameter interpolation
$X\diamond_\theta Y=\cos\theta\,X\diamond_RY+\sin\theta\,X\diamond_LY$
between the two orientations (Appendix~\ref{app:deformation}) suggests a
one-parameter deformation of the factorisation of
Section~\ref{sec:factorisation}, and, more speculatively, a connection
with the broader programme of non-commutative
geometry~\cite{Connes}.
Third, the two-mode interpretation of
Section~\ref{sec:eigen}, based on the vectorization
of operators, emerges naturally from the two-sided
Sylvester equations obtained in the interlaced
eigenvalue problem.

While the interlaced operator matrix acts naturally
on $\Hh\oplus\Hh$, its operator-valued eigenvalue
problem leads to a tensor-product representation
through the independent left and right actions
of $Q$ and $P$.

It would be interesting to determine whether this
tensor-product structure can be derived intrinsically
from the interlaced algebraic construction, independently
of the subsequent formulation of the eigenvalue problem.

Such an intrinsic understanding might also shed light
on the higher-power problem discussed in
Section~\ref{sec:eigen} and on the classification
of generalized eigenoperators raised in
Problem~\ref{prob:classify}.

Fourth, the mechanism behind this commutativity is completely
general: for \emph{any} $Q,P$ whatsoever, left multiplication
$\mathcal L_Q$ and right multiplication $\mathcal R_P$ on the space of
operators commute, since $(QX)P=Q(XP)$, regardless of $[Q,P]$. This is
the same structural move underlying Naimark's dilation
theorem~\cite{Naimark1943}: a POVM built from non-commuting effects is
always the compression of a genuine, commuting, projection-valued
measurement on a larger space. Its classical realisation for position
and momentum, the Arthurs--Kelly simultaneous-measurement
scheme~\cite{ArthursKelly1965}, and the rigorous treatment of Naimark
dilations for covariant phase-space and position--momentum
observables~\cite{BeneduciFrionGazeauPerri2022,BeneduciSchroeck2013},
reach the same doubled, commuting structure from an entirely
different, measurement-theoretic starting point. We record this
resonance without claiming a term-by-term identification.

Fifth, since the proof of Proposition~\ref{prop:squares} uses no
self-adjointness or commutation property of $Q,P$ beyond their
appearance in the defining formulas, the same interlaced factorisation
applies verbatim to non-Hermitian Hamiltonians built from $Q,P$ by
formal substitution --- for instance the exactly solvable
$\mathcal{PT}$-symmetric oscillator
$H_{\mathcal{PT}}=\tfrac12(P^2+Q^2)+\ii\alpha Q$, $\alpha\in\mathbb R$,
factorises identically with $Q$ replaced everywhere by the
complex-shifted $Q+\ii\alpha\id_{\mathcal H}$. A systematic treatment of
interlaced factorisations for basic $\mathcal{PT}$-symmetric
Hamiltonians is left for future work.

These five questions are left for future investigation.

\appendix

\section{Two-orientation deformations}
\label{app:deformation}

The presence of two orientations suggests the interpolating product
$$
X\diamond_\theta Y
=
\cos\theta\, X\diamond_RY
+
\sin\theta\, X\diamond_LY,
$$
with $\diamond_\theta|_{\theta=0}=\diamond_R$ and
$\diamond_\theta|_{\theta=\pi/2}=\diamond_L$. One may also interpolate
between ordinary matrix multiplication and the interlaced product,
$$
X\diamond_tY=(1-t)XY+t(X\diamond_RY),\quad t\in[0,1].
$$

\begin{remark}
$\diamond_\theta$ does not stay within the class of ``interlaced''
products for $\theta\notin\{0,\pi/2\}$: a generic linear combination of
$\diamond_R$ and $\diamond_L$ no longer has the property that each entry
contains exactly two terms, each a product of entries of $X$ and $Y$ in
one definite order. As a deformation for the purpose of studying the
associator, however, it is well defined.
\end{remark}

\begin{problem}
Study the associator
$\mathcal A_\theta(X,Y,Z)=(X\diamond_\theta Y)\diamond_\theta Z-
X\diamond_\theta(Y\diamond_\theta Z)$.
At $\theta=\pi/4$, is there a simplification due to the symmetry
$\diamond_R\leftrightarrow\diamond_L$?
\end{problem}


\section*{Statements and Declarations}
\addcontentsline{toc}{section}{Statements and Declarations}

\subsection*{Funding}
No funding was received to assist with the preparation of this manuscript.

\subsection*{Competing interests}
The author declares that he has no financial or non-financial interests
directly or indirectly related to the work submitted for publication.

\subsection*{Data availability}
Data sharing is not applicable to this article, as no datasets were
generated or analysed during the current study.

\subsection*{Author contribution}
J.-P.\ Gazeau is the sole author of this manuscript and takes full
responsibility for its content.


\subsection*{Acknowledgements}

The author is indebted to Alice Barbara Tumpach for her insightful
question about the (non?)-uniqueness of the oriented interlaced product,
and to Catherine Lacape for her warm hospitality in Montplaisir (Marvejols, FR), where
part of this work was carried out. The note is dedicated to the memory
of Franciszek Hugon Szafraniec, whose rigour it tries, however modestly,
to honour.



\section*{Bibliography}
\addcontentsline{toc}{section}{Bibliography}

\begin{thebibliography}{99}

\bibitem{GazeauSzafraniec2011}
J.-P. Gazeau and F.~H. Szafraniec,
Holomorphic Hermite polynomials and a non-commutative plane,
J.\ Phys.\ A: Math.\ Theor.\ \textbf{44}, 495201 (2011).
\url{https://doi.org/10.1088/1751-8121/44/49/495201}

\bibitem{GazeauSzafraniec2016}
J.-P. Gazeau and F.~H. Szafraniec,
Three paths toward the quantum angle operator,
Ann.\ Phys.\ \textbf{375}, 16--35 (2016).
\url{https://doi.org/10.1016/j.aop.2016.09.010}

\bibitem{Szafraniec1998}
F.~H. Szafraniec,
Analytic models for the quantum harmonic oscillator,
in: \emph{Operator Theory for Complex and Hypercomplex Analysis}
(Mexico City, 1994), Contemporary Mathematics vol.~212,
eds.\ E.\ Ram\'irez de Arellano, N.\ Salinas, M.~V.\ Shapiro and
N.~L.\ Vasilevski, American Mathematical Society, Providence, RI,
1998, pp.~269--276.

\bibitem{Gazeau2026}
J.-P. Gazeau,
A Matrix Mnemonic for Octonions and Cayley--Dickson Tower,
Adv.\ Appl.\ Clifford Algebras \textbf{36}, 34 (2026).

\bibitem{Baez}
J.~C. Baez,
The octonions,
Bull.\ Amer.\ Math.\ Soc.\ \textbf{39}, 145--205 (2002).

\bibitem{Schafer}
R.~D. Schafer,
\emph{An Introduction to Nonassociative Algebras},
Academic Press, New York, 1966.

\bibitem{Okubo}
S. Okubo,
\emph{Introduction to Octonion and Other Non-Associative Algebras in Physics},
Cambridge University Press, Cambridge, 1995.

\bibitem{SpringerVeldkamp}
T.~A. Springer and F.~D. Veldkamp,
\emph{Octonions, Jordan Algebras and Exceptional Groups},
Springer, Berlin, 2000.

\bibitem{KadisonRingrose1}
R.~V. Kadison and J.~R. Ringrose,
\emph{Fundamentals of the Theory of Operator Algebras, Vol.~I},
Academic Press, 1983.

\bibitem{JordanVNW}
P. Jordan, J. von Neumann and E. Wigner,
On an algebraic generalization of the quantum mechanical formalism,
Ann.\ of Math.\ \textbf{35}, 29--64 (1934).

\bibitem{McCrimmon}
K. McCrimmon,
\emph{A Taste of Jordan Algebras},
Springer, New York, 2004.

\bibitem{Pflugfelder}
H.~O. Pflugfelder,
\emph{Quasigroups and Loops: Introduction},
Heldermann Verlag, Berlin, 1990.

\bibitem{ReedSimon2}
M. Reed and B. Simon,
\emph{Methods of Modern Mathematical Physics, Vol.~II: Fourier Analysis,
Self-Adjointness},
Academic Press, 1975.

\bibitem{Dirac}
P.~A.~M. Dirac,
The quantum theory of the electron,
Proc.\ Roy.\ Soc.\ Lond.\ A \textbf{117}, 610--624 (1928).

\bibitem{GelfandVilenkin4}
I.~M. Gelfand and N.~Ya. Vilenkin,
\emph{Generalized Functions, Vol.~4: Applications of Harmonic Analysis},
Academic Press, New York, 1964.

\bibitem{Bohm1978}
A. Bohm,
\emph{The Rigged Hilbert Space and Quantum Mechanics},
Lecture Notes in Physics vol.~78, Springer, Berlin, 1978.

\bibitem{AntoineTrapani2009}
J.-P. Antoine and C. Trapani,
\emph{Partial Inner Product Spaces:
Theory and Applications},
Lecture Notes in Mathematics, Vol.~1986,
Springer, Berlin, 2009.
\url{https://doi.org/10.1007/978-3-642-05136-4}

\bibitem{AntoineInoueTrapani2002}
J.-P. Antoine, A. Inoue and C. Trapani,
\emph{Partial *-Algebras and Their Operator Realizations},
Mathematics and Its Applications, Vol.~553,
Kluwer Academic Publishers, Dordrecht, 2002.
\url{https://doi.org/10.1007/978-94-017-0065-8}

\bibitem{Treves1967}
F. Tr\`eves,
\emph{Topological Vector Spaces, Distributions
and Kernels},
Academic Press, New York, 1967.

\bibitem{TorresVegaFrederick1990}
G. Torres-Vega and J.~H. Frederick,
Quantum mechanics in phase space: New approaches to the correspondence
principle,
J.\ Chem.\ Phys.\ \textbf{93}, 8862--8874 (1990).
\url{https://doi.org/10.1063/1.459225}

\bibitem{AliPrugovecki1977}
S.~T. Ali and E. Prugove\v{c}ki,
Classical and quantum statistical mechanics in a common Liouville space,
Physica A \textbf{89}, 501--521 (1977).

\bibitem{AliAntoineGazeau2014}
S.~T. Ali, J.-P. Antoine and J.-P. Gazeau,
\emph{Coherent States, Wavelets, and Their Generalizations},
2nd edn., Theoretical and Mathematical Physics, Springer, New York, 2014.

\bibitem{BeneduciSchroeck2013}
R. Beneduci and F.~E. Schroeck Jr.,
A note on the relationship between localization and the norm-1
property,
J.\ Phys.\ A: Math.\ Theor.\ \textbf{46}, 305303 (2013).

\bibitem{Connes}
A. Connes,
\emph{Noncommutative Geometry},
Academic Press, 1994.

\bibitem{Naimark1943}
M.~A. Naimark,
On a representation of additive operator set functions,
C.\ R.\ (Doklady) Acad.\ Sci.\ URSS (N.S.) \textbf{41}, 359--361 (1943).

\bibitem{ArthursKelly1965}
E. Arthurs and J.~L. Kelly Jr.,
On the simultaneous measurement of a pair of conjugate observables,
Bell Syst.\ Tech.\ J.\ \textbf{44}, 725--729 (1965).

\bibitem{BeneduciFrionGazeauPerri2022}
R. Beneduci, E. Frion, J.-P. Gazeau and A. Perri,
Quantum formalism on the plane: POVM-Toeplitz quantization, Naimark
theorem and linear polarization of the light,
Ann.\ Phys.\ \textbf{447}, 169134 (2022).

\end{thebibliography}
\end{document}